\begin{document}
\ArticleType{RESEARCH PAPER}
\Year{2020}
\Month{}
\Vol{}
\No{}
\DOI{}
\ArtNo{}
\ReceiveDate{}
\ReviseDate{}
\AcceptDate{}
\OnlineDate{}

\title{Statistical CSI based Design for Intelligent Reflecting Surface Assisted MISO Systems}{Statistical CSI based Design for Intelligent Reflecting Surface Assisted MISO Systems}

\author[1]{Xiaoling Hu}{}
\author[1]{Junwei Wang}{}
\author[1]{Caijun Zhong}{{caijunzhong@zju.edu.cn}}

\AuthorMark{Hu X L}

\AuthorCitation{ Hu X L, Wang J W, Zhong C J}


\address[1]{College of information Science and electronic engineering, Zhejiang University, Hangzhou {\rm 310027}, China}

\abstract{
This paper considers an intelligent reflecting surface (IRS) aided multiple-input single-output communication system, where statistical channel state information (CSI) is exploited for transmit beamforming and IRS beamforming. A tight upper bound is derived for the ergodic capacity of the system. Based on which, the joint optimization of transmit beam and IRS beam are studied. Depending on whether a line-of-sight path exists between the access point and user, two different cases, namely, Rician fading and  Rayleigh fading, are separately treated. Specifically, for the Rician fading case, an iterative algorithm is proposed, which is guaranteed to converge. For the Rayleigh fading case, closed-form designs are obtained for the transmit beam and IRS beam. Simulation results show the proposed beamforming scheme achieves similar performance as the benchmark algorithm requiring instantaneous CSI.}

\keywords{Beamforming, IRS, MISO}

\maketitle

\section{Introduction}

Intelligent reflecting surfaces (IRS) have
recently emerged as a promising low-cost approach to enhance the spectral efficiency and extend the coverage of communication systems \cite{wu2019towards}. Specifically, the IRS enables the manipulation of signal propagation environment by smartly controlling a large number of low-cost, passive reflecting elements, each of which can independently reflect the incident signal with an adjustable phase shift \cite{cui2014coding}.

Due to the potential of realizing a software programmable propagation environment, IRS has attracted enormous attentions from the research community. The fundamental performance of IRS-assisted communications has been studied in \cite{Q.Tao1}, while beamforming design in IRS-aided communication systems has been considered in \cite{hu2020programmable,AI,wu2019intelligent, wu2019beamforming,yang2020irs}. Specifically, in \cite{hu2020programmable}, a low-complexity beamforming method based on beam training is proposed. Capitalizing on the recent advance in artificial intelligence \cite{AI0}, an unsupervised learning approach is proposed to design the passive beamforming at the IRS in \cite{AI}. Furthermore, the work \cite{wu2019intelligent} considered the joint optimization of active beamforming at the access ponit (AP) and passive beamforming at the IRS. Aiming at minimizing the transmit power subject to individual rate constraints, the optimization problem is solved by the alternating optimization approach. Later on, taking into account of the practical constraint of discrete phase shift, a mixed-integer non-linear program problem is formulated to jointly optimize the transmit beam and phase shifts \cite{wu2019beamforming}. In addition, the integration of IRS with other transmission technologies has been investigated, such as orthogonal frequency division multiple access (OFDMA) \cite{yang2020irs}, non-orthogonal multiple access (NOMA)\cite{NOMA}, physical layer security \cite{PLS}, two-way relaying \cite{twoway} and millimeter wave communications \cite{mmwave}.

To fully realize the potential of IRS, the availability of channel state information (CSI) is crucial \cite{J.Zhang}. However, due to the passive architecture of the IRS as well as its massive number of reflecting elements, the acquisition of CSI is a challenging task. In general, the reported channel estimation methods can be divided into two categories. One is to separately estimate the direct channel from the AP to the IRS and the IRS-user channel \cite{taha2019enabling}, by assuming that the IRS can either work in the reflection mode for data transmission or receive mode for channel estimation. However, the realization of receiving mode requires a large number of receive radio frequency (RF) chains, which would significantly increase hardware cost as well as power consumption. Another typical approach is to estimate the cascaded channel at the AP, by pilot training and proper design of IRS reflection patterns \cite{yang2020intelligent,zheng2019intelligent,mishra2019channel}. However, since the required pilot length is proportional to the number of reflecting elements, the training overhead would become prohibitive as the number of reflecting elements grows.

To overcome the drawbacks of above channel estimation approaches, one promising direction is to use only statistical CSI for beamforming design. The reason is that, the statistical CSI is relatively easy to obtain through long-term observation. In addition, the statistical CSI varies slowly, hence no frequent update is required, which substantially reduces the training overhead. In a recent work \cite{shijin}, the authors proposed to adopt statistical CSI for the design of phase shifts at the RIS. However, they assume that the AP uses the maximum ratio transmission, which requires the instantaneous CSI of the composite channel. Motivated by this, this paper considers a more practical scenario where only statistical CSI is available at the AP, and pursue a joint design of the transmit beamformer and phase shifts to maximize the ergodic capacity. Depending on whether a line-of-sight (LOS) path exists between the AP and the user, two separate cases are addressed. The main contributions of this paper are summarized as follows:
 \begin{itemize}
     \item In case there is LOS path between the AP and user, Rician fading is used to model the direct channel. Then, an alternating algorithm is proposed to maximize the ergodic capacity by jointly designing the transmit beam at the AP and phase shifts at the IRS. Specifically, the joint optimization problem is tackled by solving two sub-problems in an iterative manner. For each sub-problem, the optimal solution is obtained in closed form. Moreover, the proposed alternating algorithm is guaranteed to converge. Simulation results show that the proposed algorithm achieves similar performance as the algorithm requiring perfect and instantaneous CSI, although only statistical CSI is exploited.
     \item In case there is no LOS path between the AP and user, Rayleigh fading is used to model the direct channel. In this case, closed-form optimal solutions for both the transmit beam and the phase shifts are obtained. Simulation results show that the performance of our proposed beamforming scheme is almost identical to that of the algorithm requiring perfect and instantaneous CSI \cite{wu2019intelligent}. Moreover, since closed-form solutions can be obtained, the proposed scheme has much lower complexity when compared with the iterative algorithm in   \cite{wu2019intelligent}.
 \end{itemize}

The remainder of the paper is organized as follows. Section \ref{S2} introduces the IRS-aided communication system model. The joint optimization problem of the transmit beamforming and IRS beamforming is solved in Section \ref{S3}, while numerical simulation results are presented in Section \ref{S4}. Finally, Section \ref{S5} concludes the paper.

\emph{Notation:} Throughout the paper, boldface lower-case and upper-case letters indicate vectors and matrices, respectively. $\mathbb{C}$ is the set of complex numbers. $(\cdot)^T$ , $(\cdot)^*$ and $(\cdot)^{H}$ represent the transpose, complex conjugate and Hermitian transpose, respectively. $|\cdot|$ and $\|\cdot\|$ denote the complex modulus and vector Euclidean norm, respectively. $\mathsf{diag}\{\cdot\}$ and $\mathsf{norm}(\cdot)$ indicate the diagonalization and normalization of a vector. $\mathsf{tr}(\cdot)$ returns the trace of a matrix. $\mathbb{E}\{\cdot\}$ is the statistical expectation. $\mathsf{angle} ( \cdot )$  denotes the angle in radians.

\section{System Model} \label{S2}
We consider a three node system where the multi-antenna AP communicates with the single antenna user assisted by an IRS, as illustrated in Fig. \ref{model}. Both the transmitter and the IRS have the uniform linear array (ULA). The AP is equipped with $M$ antennas, while the IRS is equipped with $N$ reflecting elements. In addition to the IRS-assisted propagation path, the direct link between the AP and the user is taken into account. Frequency-flat channels are considered. Moreover, we assume that only statistical CSI is available at the AP, and the control link between the AP and the IRS is error free.

\begin{figure}[htbp]
	\centering
	\includegraphics[width=3in]{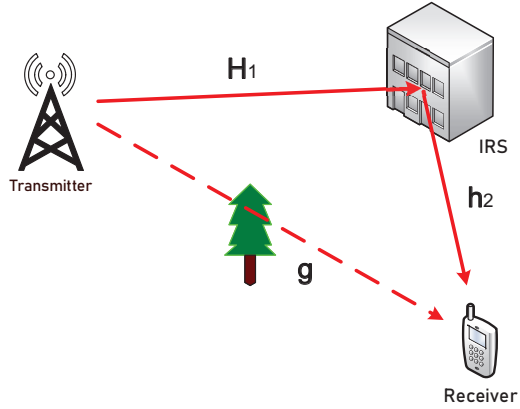}
	\caption{System model}
	\label{model}
\end{figure}

The received signal $y$ can be expressed as
\begin{equation}
y = \sqrt{P}\left(\frac{\mathbf{h}_2^T \boldsymbol{\Phi} \mathbf{H}_1}{\sqrt{d_1^{\alpha_1}d_2^{\alpha_2}}}+\frac{\mathbf{g}^T}{\sqrt{d_0^{\alpha_0}}}\right)\mathbf{f}x + z,
\end{equation}
where $P$ is the transmit power, $x$ is the Gaussian input signal satisfying $\mathbb{E}\{xx^*\}=1$, $\mathbf{f} \in \mathbb{C}^{M \times 1}$ is the transmit beamforming vector with $\|\mathbf{f}\|^2 = 1$, $\boldsymbol{\Phi}={\sf diag}\{\boldsymbol{\phi}\}$ is the phase shift matrix, where $\boldsymbol{\phi}=[\phi_1, \phi_2, \dots,\phi_N]^T\in \mathbb{C}^{N \times 1}$ and $|\phi_n|=1, n=1,\dots,N$. In addition, $\mathbf{H}_1 \in \mathbb{C}^{N \times M}$ denotes channel between the transmitter and the IRS, $\mathbf{h}_2 \in \mathbb{C}^{N \times 1}$ denotes the channel between the IRS and the receiver, $\mathbf{g}\in \mathbb{C}^{M \times 1}$ denotes the channel between the transmitter and the receiver, and $z$ is the additive white Gaussian noise (AWGN) with zero mean and variance $N_0$. Moreover, $d_k$ and $\alpha_k$, $k=0,1,2$ denote the distance between the nodes, and pathloss exponent of the channels, respectively.

We assume that the IRS is deployed at a desirable location, such that there exists both line-of-sight (LOS) path to the AP and user. Hence, Rician distribution is adopted to model the fading channels, i.e.,
\begin{align}
\mathbf{H}_1 = \sqrt{\frac{K_1}{K_1+1}} \mathbf{\bar{H}}_1 + \sqrt{\frac{1}{K_1+1}} \mathbf{\tilde{H}}_1,
\end{align}
and
\begin{align}
\mathbf{h}_2 = \sqrt{\frac{K_2}{K_2+1}} \mathbf{\bar{h}}_2 + \sqrt{\frac{K_2}{K_2+1}}\mathbf{\tilde{h}}_2,
\end{align}
where $K_i$, $i=1,2$ denotes the Rician $K$ factor, $\mathbf{\tilde{H}}_1$ and $\mathbf{\tilde{h}}_2$ denotes the non-LoS components, whose elements follow the zero mean complex Gaussian distribution with unit variance. Also, $\mathbf{\bar{H}}_1$ and $\mathbf{\bar{h}}_2$ denote the LoS components, which are given by the response of the uniform linear array (ULA). For instance,
\begin{equation}\label{LoS}
\mathbf{\bar{H}}_1 = \mathbf{a}_N(\theta_\text{AoA,1}) \mathbf{a}_M^T(\theta_\text{AoD,1}), \mbox{ and } \mathbf{\bar{h}}_2 = \mathbf{a}_N(\theta_\text{AoD,2}),
\end{equation}
where $\mathbf{a}_N(\theta) \triangleq [1, e^{j\theta}, \dots, e^{j(N-1)\theta}]^T$, $\theta_\text{AoA,1}$ denotes the angle of departure arrival (AoA) at the IRS, $\theta_\text{AoD,1}$ and $\theta_\text{AoD,2}$ denote the angles of departure departure (AoD) from the AP and IRS, respectively.

As for the direct channel between the AP and user, we consider two separate cases, namely Rician fading and Rayleigh fading. Since Rayleigh fading is a special case of Rician fading, the channel vector can be expressed as
\begin{align}
\mathbf{g} = \sqrt{\frac{K_0}{K_0+1}} \mathbf{\bar{g}} +\sqrt{\frac{1}{K_0+1}}\mathbf{\tilde{g}},
\end{align}
where $K_0$ is the Rician $K$ factor, $\mathbf{\tilde{g}}$ denotes the non-LoS components, whose elements follow the zero mean complex Gaussian distribution with unit variance. Also, $\mathbf{\bar{g}}\triangleq \mathbf{a}_M(\theta_\text{AoD,0})$ denotes the LoS components.

With CSI at the receiver, the instantaneous channel capacity of the system can be expressed as
\begin{align*}
C = \log_2 \left( 1+\gamma_0\left|(\mathbf{h}_2^T \boldsymbol{\Phi} \mathbf{H}_1+ \lambda\mathbf{g}^T) \mathbf{f} \right|^2 \right),
\end{align*}
where $\gamma_0=\frac{P}{d_1^{\alpha_1}d_2^{\alpha_2}N_0}$ and $\lambda = \sqrt{\frac{d_1^{\alpha_1}d_2^{\alpha_2}}{d_0^{\alpha_0}}}$.

We consider the realistic scenario that the AP and IRS only have access to the statistical CSI, as such, instead of maximizing the the instantaneous channel capacity, our objective is to maximize the ergodic capacity by joint design of the beamforming vector $\mathbf{f}$ and the phase shift beam $\boldsymbol{\phi}$. Therefore, we have the following optimization problem P1:
\begin{equation}
\label{problem1}
\begin{aligned}
\max_{\mathbf{f},  {\bm \phi}} \quad & \mathbb{E} \left\{ C \right\} \\
\mathrm{s.t.} \quad & \|\mathbf{f}\|^2 = 1 \\
& | {\phi}_i| = 1, i=1, \dots, N.
\end{aligned}
\end{equation}

\section{Joint Design of Transmit Beam and Phase Shift Beam} \label{S3}
In this section, we focus on solving optimization problem P1. Due to the involved expression, it is challenging to derive the exact expression for the ergodic capacity. Motivated by this, we first present a tight and tractable upper bound for the ergodic capacity, and then propose efficient algorithms to maximize the upper bound by joint design of the transmit beam and phase shifts.

\begin{proposition}
	\label{proposition}
	The ergodic capacity of the system is upper bounded by
	\begin{equation}
	\mathbb{E} \left\{ C \right\}\leq C_\mathrm{up} = \log_2\left(1 + \gamma_0 \left(  \left|\left(a_2 a_1 \mathbf{\bar{h}}_2^T \boldsymbol{\Phi} \mathbf{\bar{H}}_1 + \lambda a_0 \mathbf{\bar{g}}^T\right) \mathbf{f}\right|^2
	+ b_2^2 a_1^2 \left\|\mathbf{\bar{H}_\text{1} f}\right\|^2  + \left(a_2^2+b_2^2 \right)b_1^2 N + \lambda^2b_0^2 \right) \right),
	\end{equation}
	where $a_i=\sqrt{\frac{K_i}{K_i+1}}$, $b_i =\sqrt{\frac{1}{K_i+1}} $, $i=0,1,2$.
\end{proposition}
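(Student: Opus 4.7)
The plan is to apply Jensen's inequality and then reduce the statement to a second-moment computation that is made tractable by independence and unit-modulus structure. Because $\log_2(1+\cdot)$ is concave, we have
\begin{equation*}
\mathbb{E}\{C\} \;=\; \mathbb{E}\bigl\{\log_2\bigl(1+\gamma_0\,|(\mathbf{h}_2^T\boldsymbol{\Phi}\mathbf{H}_1+\lambda\mathbf{g}^T)\mathbf{f}|^2\bigr)\bigr\} \;\le\; \log_2\Bigl(1+\gamma_0\,\mathbb{E}\bigl\{|(\mathbf{h}_2^T\boldsymbol{\Phi}\mathbf{H}_1+\lambda\mathbf{g}^T)\mathbf{f}|^2\bigr\}\Bigr),
\end{equation*}
so the entire task becomes evaluating the expected squared magnitude of the effective scalar channel in closed form and matching it to the bracketed expression in the proposition.

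The next step is to insert the Rician decompositions for $\mathbf{H}_1$, $\mathbf{h}_2$, and $\mathbf{g}$ and split the effective scalar into its deterministic LoS mean $(a_1a_2\bar{\mathbf{h}}_2^T\boldsymbol{\Phi}\bar{\mathbf{H}}_1+\lambda a_0\bar{\mathbf{g}}^T)\mathbf{f}$ plus four zero-mean fluctuations
\begin{equation*}
a_2b_1\,\bar{\mathbf{h}}_2^T\boldsymbol{\Phi}\tilde{\mathbf{H}}_1\mathbf{f},\qquad a_1b_2\,\tilde{\mathbf{h}}_2^T\boldsymbol{\Phi}\bar{\mathbf{H}}_1\mathbf{f},\qquad b_1b_2\,\tilde{\mathbf{h}}_2^T\boldsymbol{\Phi}\tilde{\mathbf{H}}_1\mathbf{f},\qquad \lambda b_0\,\tilde{\mathbf{g}}^T\mathbf{f}.
\end{equation*}
Because $\tilde{\mathbf{H}}_1$, $\tilde{\mathbf{h}}_2$, $\tilde{\mathbf{g}}$ are mutually independent and each centered, these four fluctuations are pairwise uncorrelated: every cross term factors through an expectation of a single zero-mean NLoS factor and vanishes. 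Hence $\mathbb{E}\{|\,\cdot\,|^2\}$ reduces to $|\text{mean}|^2$ plus the four individual variances, which is exactly the structure of the claimed bound.

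The four variances are then evaluated using three structural facts: (i) $|\phi_n|=1$, so $\boldsymbol{\Phi}^H\boldsymbol{\Phi}=\mathbf{I}_N$; (ii) the steering vector $\bar{\mathbf{h}}_2$ satisfies $\|\bar{\mathbf{h}}_2\|^2=N$; (iii) $\tilde{\mathbf{H}}_1\mathbf{f}\sim\mathcal{CN}(\mathbf{0},\mathbf{I}_N)$ (since $\|\mathbf{f}\|=1$), and analogously $\tilde{\mathbf{g}}^T\mathbf{f}\sim\mathcal{CN}(0,1)$, while $\tilde{\mathbf{h}}_2^T\mathbf{v}\mid\mathbf{v}\sim\mathcal{CN}(0,\|\mathbf{v}\|^2)$. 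These deliver, respectively, $a_2^2b_1^2N$ for the first term, $a_1^2b_2^2\|\bar{\mathbf{H}}_1\mathbf{f}\|^2$ for the second (after absorbing $\boldsymbol{\Phi}$ via unitarity), $b_1^2b_2^2N$ for the bilinear third term (via iterated conditioning on $\tilde{\mathbf{H}}_1$), and $\lambda^2 b_0^2$ for the direct-link term. Collecting the two $N$-terms as $(a_2^2+b_2^2)b_1^2N$ reproduces the bracketed expression in the proposition.

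The only real obstacle is clerical: one must check that every cross term in the expanded second moment actually vanishes, with the most delicate cases being products involving the bilinear fluctuation $\tilde{\mathbf{h}}_2^T\boldsymbol{\Phi}\tilde{\mathbf{H}}_1\mathbf{f}$, where vanishing hinges on pulling the expectation through an independent, centered NLoS factor. Once that bookkeeping is in place, the bound follows immediately from Jensen combined with the elementary second-moment identities listed above.
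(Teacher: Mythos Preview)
Your proposal is correct and follows essentially the same route as the paper: apply Jensen's inequality, expand the effective scalar channel via the Rician decompositions into a deterministic mean plus four zero-mean fluctuations, argue that all cross terms vanish by independence and centeredness, and then compute each variance using the unit-modulus structure of $\boldsymbol{\Phi}$ and the i.i.d.\ Gaussian nature of the NLoS components. If anything, your write-up is slightly more explicit than the paper's about why the cross terms involving the bilinear fluctuation vanish (via iterated conditioning), which is a welcome level of care.
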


\begin{proof}
	See the Appendix \ref{Appendix1}.
\end{proof}

Armed with the concise ergodic capacity upper bound given in Proposition \ref{proposition}, we are ready to move forward to the design of the transmit beam and phase shift beam.
Specifically, we have the following optimization problem P2:
\begin{equation}
\label{p2}
\begin{aligned}
\max_{\mathbf{f},  {\bm \phi}} \quad & C_{\text{up}} \\
\mathrm{s.t.} \quad & \|\mathbf{f}\|^2 = 1 \\
&  | {\phi}_i| = 1, i=1, \dots, N.
\end{aligned}
\end{equation}

Next, depending on whether a LOS path exists for the direct channel between the AP and the user, we give a separate treatment for the two cases, namely, Rician fading and Rayleigh fading.

\subsection{ Rician Fading}
\label{section phi}
When ${\bf g}$ undergoes Rician fading, to maximize the ergodic capacity upper bound, we have the following optimization problem P3:
\begin{equation}
\label{p3}
\begin{aligned}
\max_{\mathbf{f},  {\bm \phi}} \quad & \left|\left(a_2 a_1 \mathbf{\bar{h}}_2^T \boldsymbol{\Phi} \mathbf{\bar{H}}_1 + \lambda a_0 \mathbf{\bar{g}}^T\right) \mathbf{f}\right|^2
+ b_2^2 a_1^2 \left\|\mathbf{\bar{H}_\text{1} f}\right\|^2 \\
\mathrm{s.t.} \quad &  | {\phi}_i| = 1, i=1, \dots, N.
\end{aligned}
\end{equation}

Since the above optimization problem is non-convex, the optimal ${\bf f}$ and ${\bm \phi}$ are difficult to obtain. As such, instead of looking for the optimal solution, we propose an alternating method to obtain a suboptimal solution.
\subsubsection{Design of the phase shift beam}
For given transmit beam $\mathbf{f}$, the optimization problem P3 becomes
\begin{equation} \label{p3_sub1}
\begin{aligned}
\max_{ {\bm \phi}} \quad & \left|a_2 a_1 \mathbf{\bar{h}}_2^T \boldsymbol{\Phi} \mathbf{\bar{H}}_1 \mathbf{f} + \lambda a_0 \mathbf{\bar{g}}^T \mathbf{f}\right|^2 \\
\mathrm{s.t.} \quad & |\phi_i| = 1, i=1, \dots, N.
\end{aligned}
\end{equation}

Recall that ${\bm \Phi}=\text{diag} ({\bm \phi})$. The optimization problem P3 can be rewritten as
\begin{equation}
\begin{aligned}
\max_{ {\bm \phi}} \quad & \left|  a_2 a_1 {\bm \phi}^T \text{diag} (\mathbf{\bar{h}}_2 )  \mathbf{\bar{H}}_1 \mathbf{f} + \lambda a_0 \mathbf{\bar{g}}^T \mathbf{f}\right|^2 \\
\mathrm{s.t.} \quad & |{\phi}_i| = 1, i=1, \dots, N.
\end{aligned}
\end{equation}

Although the above problem is non-convex,  a closed-form solution  can be obtained by exploiting the special structure of the objective function. Specifically, we have the following inequality
\begin{align} \label{E1}
 \left|  a_2 a_1 {\bm \phi}^T \text{diag} (\mathbf{\bar{h}}_2 )  \mathbf{\bar{H}}_1 \mathbf{f} + \lambda a_0 \mathbf{\bar{g}}^T \mathbf{f}\right|^2
 \le
 \left|  a_2 a_1 {\bm \phi}^T \text{diag} (\mathbf{\bar{h}}_2 )  \mathbf{\bar{H}}_1 \mathbf{f}  \right|^2
 +\left|  \lambda a_0 \mathbf{\bar{g}}^T \mathbf{f}\right|^2,
\end{align}
where the equality holds if and only if $\mathsf{angle} \left({\bm \phi}^T \text{diag} (\mathbf{\bar{h}}_2 )  \mathbf{\bar{H}}_1 \mathbf{f} \right)=
\mathsf{angle} \left( \mathbf{\bar{g}}^T \mathbf{f} \right)$.

Next, we will show that a solution, satisfying (\ref{E1}) with equality as well as the phase shift constraints, always exists. With the equality in (\ref{E1}) holding, the optimization problem P3 is equivalent to
\begin{equation}
\begin{aligned}
\max_{{\bm \phi}} \quad & \left| {\bm \phi}^T \text{diag} (\mathbf{\bar{h}}_2 )  \mathbf{\bar{H}}_1 \mathbf{f} \right|^2 \\
\mathrm{s.t.} \quad & |{\phi}_i| = 1, i=1, \dots, N \\
& \mathsf{angle} \left({\bm \phi}^T \text{diag} (\mathbf{\bar{h}}_2 )  \mathbf{\bar{H}}_1 \mathbf{f} \right)=
\mathsf{angle} \left( \mathbf{\bar{g}}^T \mathbf{f} \right).
\end{aligned}
\end{equation}

It is obvious that the optimal solution is given by
\begin{align} \label{rician_op_phi}
 {\bm \phi}^\star=e^{j \mathsf{angle} \left( \mathbf{\bar{g}}^T \mathbf{f} \right)- j\mathsf{angle} \left( \text{diag} (\mathbf{\bar{h}}_2 )  \mathbf{\bar{H}}_1 \mathbf{f} \right) }.
\end{align}

\subsubsection{Design of the transmit beam}
\label{section f}
For given phase shift beam ${\bm \phi}$, problem \eqref{p3} becomes
\begin{equation}\label{p3_sub2}
\begin{aligned}
\max_{\mathbf{f}} \quad & \left|\left(a_2 a_1 \mathbf{\bar{h}}_2^T \boldsymbol{\Phi} \mathbf{\bar{H}}_1 + \lambda a_0 \mathbf{\bar{g}}^T\right) \mathbf{f}\right|^2
+ b_2^2 a_1^2 \left\|\mathbf{\bar{H}_\text{1} f}\right\|^2 \\
\mathrm{s.t.} \quad & \|\mathbf{f}\|^2 = 1,
\end{aligned}
\end{equation}
which can be rewritten  in a more compact form:
\begin{equation}
\begin{aligned}
\max_{\mathbf{f}} \quad & \left\|\mathbf{H f} \right\|^2 \\
\mathrm{s.t.} \quad & \|\mathbf{f}\|^2 = 1,
\end{aligned}
\end{equation}
where
\begin{align*}
\mathbf{H} \triangleq
\begin{bmatrix}
a_2 a_1 \mathbf{\bar{h}}_2^T \boldsymbol{\Phi} \mathbf{\bar{H}}_1 + \lambda a_0 \mathbf{\bar{g}}^T \\
b_2 a_1 \mathbf{\bar{H}_\text{1}}
\end{bmatrix}.
\end{align*}
This is the well-known matrix induced 2-norm problem, which can be solved by performing a singular value decomposition (SVD) of $\mathbf{H}$. Let $\mathbf{H = U \Sigma V^\mathrm{H}}$(the singular values in $\mathbf{\Sigma}$ are in descending order), and then we obtain the optimal beam vector as
\begin{equation}\label{rician_op_f}
\mathbf{f}^\star = {\bf v}_1,
\end{equation}
where ${\bf v}_1$ is the first column of ${\bf V}$.

\subsubsection{Joint Design of the transmit beam and phase shift beam}
The optimization problem (\ref{p3}) is tackled  by solving two sub-problems (\ref{p3_sub1}) and (\ref{p3_sub2}) in an iterative manner, and the details  are summarized in Algorithm \ref{Alg}.

 \begin{algorithm}[!t]
\caption{Alternating Optimization Algorithm}
\label{Alg}
\begin{algorithmic}[1]
\STATE{$\mathbf{Initialization:}$ Given feasible initial solutions ${\bm \phi}_{0}$, ${\bf f}_{0}$ and the iteration index $i=0$.}

\REPEAT
\STATE For given transmit beam ${\bf f}_{i}$,  calculate the optimal phase shift beam accoring to (\ref{rician_op_phi}), which yields ${\bm \phi}_{i+1}$ .
\STATE For given phase shift beam ${\bm \phi}_{i+1}$, compute the optimal transmit beam according to  (\ref{rician_op_f}), which yields ${\bf f}_{i+1}$.
\STATE $i \leftarrow i+1$.

\UNTIL The fractional increase of the objective value  is below a threshold $\varepsilon > 0$.

\STATE $\mathbf{Output:}$ ${\bm \phi}^{\star} ={\bm \phi}_{i}$ and ${\bf f}^{\star}={\bf f}_{i}$.
  \end{algorithmic}
\end{algorithm}

\begin{remark} \label{remark1}
For each sub-problem, the optimal solution is obtained which ensures the objective value of the problem P3 is non-decreasing over iterations. Thus,
 the Algorithm \ref{Alg} is guaranteed to converge.
\end{remark}

\subsection{Rayleigh Fading}

When ${\bf g}$ undergoes Rayleigh fading, i.e., $K_0 = 0$, the ergodic capacity upper bound becomes
\begin{equation} \label{C_up_ray}
C_\mathrm{up} = \log_2\left(1 + \gamma_0 \left(  \left|a_2 a_1 \mathbf{\bar{h}}_2^T \boldsymbol{\Phi} \mathbf{\bar{H}}_1 \mathbf{f}\right|^2
+ b_2^2 a_1^2 \left\|\mathbf{\bar{H}_\text{1} f}\right\|^2   + \left(a_2^2+b_2^2 \right)b_1^2 N + \lambda^2\right) \right).
\end{equation}

Then, we have the following equivalent optimization problem P4:
\begin{equation}
\label{p4}
\begin{aligned}
\max_{\mathbf{f},  {\bm \phi}} \quad & a_2^2 a_1^2 \left|\mathbf{\bar{h}}_2^T \boldsymbol{\Phi} \mathbf{\bar{H}}_1 \mathbf{f}\right|^2
+ b_2^2 a_1^2 \left\|\mathbf{\bar{H}_\text{1} f}\right\|^2 \\
\mathrm{s.t.} \quad & \|\mathbf{f}\|^2 = 1 \\
&  | {\phi}_i| = 1, i=1, \dots, N.
\end{aligned}
\end{equation}

The objective function can be further expressed as
\begin{align}
 a_2^2 a_1^2 \left|\mathbf{\bar{h}}_2^T \boldsymbol{\Phi} \mathbf{\bar{H}}_1 \mathbf{f}\right|^2+ b_2^2 a_1^2 \left\|\mathbf{\bar{H}_\text{1} f}\right\|^2
 = \left\{a_2^2 a_1^2 f_1( {\bm \phi}) + b_2^2 a_1^2 N \right\}f_2(\mathbf{f}),
\end{align}
where
\begin{align}
&f_1( {\bm \phi})= | {\bm \phi}^T \text{diag} \left( \mathbf{a}_N(\theta_\text{AoD,2})  \right) \mathbf{a}_N(\theta_\text{AoA,1})|^2,\\
&f_2(\mathbf{f})=|\mathbf{a}_M^T(\theta_\text{AoD,1}) \mathbf{f}|^2.
\end{align}

As such, the optimization problem P4 becomes
\begin{equation} \label{p4_1}
\begin{aligned}
\max_{\mathbf{f},  {\bm \phi}} \quad & \left\{a_2^2 a_1^2 f_1(\boldsymbol{\phi}) + b_2^2 a_1^2 N \right\}f_2(\mathbf{f}) \\
\mathrm{s.t.} \quad & \|\mathbf{f}\|^2 = 1 \\
& |\phi_i|=1, i=1,\ldots,N.
\end{aligned}
\end{equation}

Noticing that the optimization of ${\bm \phi}$ and $\mathbf{f}$ is decoupled, problem (\ref{p4_1}) can be equivalently transformed into two  sub-problems with respect to ${\bm \phi}$ and ${{\bf f}}$ respectively:
\begin{equation} \label{p4_2}
\begin{aligned}
\max_{   {\bm \phi}} \quad &   f_1(\boldsymbol{\phi})  =|\boldsymbol{\phi}^T \text{diag} \left( \mathbf{a}_N(\theta_\text{AoD,2})  \right) \mathbf{a}_N(\theta_\text{AoA,1})|^2  \\
\mathrm{s.t.} \quad &  |\phi_i|=1, i=1,\ldots,N,
\end{aligned}
\end{equation}
and
\begin{equation} \label{p4_3}
\begin{aligned}
\max_{\mathbf{f}} \quad & f_2(\mathbf{f})=|\mathbf{a}_M^T(\theta_\text{AoD,1}) \mathbf{f}|^2 \\
\mathrm{s.t.} \quad & \|\mathbf{f}\|^2 = 1.
\end{aligned}
\end{equation}

To this end, it is not difficult to show that the optimal solutions for sub-problems (\ref{p4_2}) and (\ref{p4_3}) can be respectively obtained as
\begin{align}
 &{\bm \phi}^{\star}
 =\left( \text{diag} \left( \mathbf{a}_N(\theta_\text{AoD,2})  \right) \mathbf{a}_N(\theta_\text{AoA,1}) \right)^*,
\label{op_phi} \\
 & {\bf f}^\star=\frac{\mathbf{a}_M^*(\theta_\text{AoD,1}) }{\| \mathbf{a}_M(\theta_\text{AoD,1}) \|}=
 \frac{\mathbf{a}_M^*(\theta_\text{AoD,1}) }{\sqrt{M}}.\label{op_f}
\end{align}

\begin{remark}
The optimal phase shift beam is determined by both the AOA of the AP-IRS channel and the AOD
of the IRS-user channel, while the optimal transmit beam is only determined by  the AOD of the AP-IRS channel. In particular, it is desired to align the transmit beam to the IRS direction.
This is reasonable because the transmitter only has  statistical CSI about the IRS-aided channels.
\end{remark}

\begin{proposition} \label{prop2}
    With the optimal transmit beamformer ${\bf f}^\star$ and phase shift beam ${\bm \phi}^\star$, the ergodic capacity upper bound is given by
    \begin{align}
        C_{\text{up}}=\log_2\left(1 + \gamma_0 \left( a_2^2 a_1^2 M N^2 + b_2^2 a_1^2 M N + (a_2^2+b_2^2) b_1^2 N + \lambda^2 \right) \right)
    \end{align}
\end{proposition}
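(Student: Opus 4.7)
The plan is to substitute the closed-form optimizers $\boldsymbol{\phi}^{\star}$ and $\mathbf{f}^{\star}$ from \eqref{op_phi} and \eqref{op_f} directly into the Rayleigh-fading expression \eqref{C_up_ray} and simplify each term using the unit-modulus structure of the ULA steering vectors. Concretely, I would organize the computation around three identities for array response vectors: (i) $\|\mathbf{a}_N(\theta)\|^2 = N$ for any angle $\theta$; (ii) for any two angles $\theta,\theta'$, $\mathsf{diag}(\mathbf{a}_N(\theta))\mathbf{a}_N(\theta')$ is itself a unit-modulus vector of length $N$; and (iii) $\mathbf{a}_M^T(\theta)\mathbf{a}_M^{*}(\theta)=M$. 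These three facts are the only non-trivial inputs beyond the formulas already derived in the paper.

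First I would evaluate the two decoupled factors. For the phase-shift factor, plugging $\boldsymbol{\phi}^{\star}$ into $f_1$ gives the inner product of the vector $\mathsf{diag}(\mathbf{a}_N(\theta_{\mathrm{AoD},2}))\mathbf{a}_N(\theta_{\mathrm{AoA},1})$ with itself (up to complex conjugation), which by (ii) equals $N$, so $f_1(\boldsymbol{\phi}^{\star})=N^2$. For the transmit-beam factor, plugging $\mathbf{f}^{\star}=\mathbf{a}_M^{*}(\theta_{\mathrm{AoD},1})/\sqrt{M}$ into $f_2$ gives $|\mathbf{a}_M^T(\theta_{\mathrm{AoD},1})\mathbf{a}_M^{*}(\theta_{\mathrm{AoD},1})|^2/M = M^2/M = M$ by (iii). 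Consequently the coupled term contributes $a_2^2 a_1^2 f_1(\boldsymbol{\phi}^{\star})f_2(\mathbf{f}^{\star}) = a_2^2 a_1^2 M N^2$.

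Next I would handle the remaining non-constant term $b_2^2 a_1^2 \|\mathbf{\bar{H}}_1 \mathbf{f}^{\star}\|^2$. Using the rank-one LoS structure $\mathbf{\bar{H}}_1=\mathbf{a}_N(\theta_{\mathrm{AoA},1})\mathbf{a}_M^T(\theta_{\mathrm{AoD},1})$, the product $\mathbf{\bar{H}}_1 \mathbf{f}^{\star}$ factors as the scalar $\mathbf{a}_M^T(\theta_{\mathrm{AoD},1})\mathbf{f}^{\star}$ times $\mathbf{a}_N(\theta_{\mathrm{AoA},1})$. Squaring the norm and applying (i) and (iii) gives $\|\mathbf{\bar{H}}_1 \mathbf{f}^{\star}\|^2 = f_2(\mathbf{f}^{\star})\cdot\|\mathbf{a}_N(\theta_{\mathrm{AoA},1})\|^2 = MN$. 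Summing the coupled term, this contribution, and the constant $(a_2^2+b_2^2)b_1^2 N + \lambda^2$ that is independent of $\mathbf{f}$ and $\boldsymbol{\phi}$ yields exactly the claimed closed form inside the logarithm.

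There is no real obstacle here; the statement is a substitution identity once the ULA inner-product facts are in hand. The only point that deserves a line of justification is the equality $\|\mathbf{\bar{H}}_1 \mathbf{f}^{\star}\|^2 = MN$, because one might naively expect an angle-dependent mismatch between $\theta_{\mathrm{AoA},1}$ and $\theta_{\mathrm{AoD},1}$ to matter, whereas the rank-one LoS structure completely decouples the two. I would make this explicit in the write-up so that the reader sees transparently that each of the four terms in the conclusion corresponds to one summand of \eqref{C_up_ray}.
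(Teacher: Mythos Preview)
Your proposal is correct and is exactly the approach the paper takes: the paper's proof is a single sentence stating that one substitutes \eqref{op_phi} and \eqref{op_f} into \eqref{C_up_ray}, and your write-up simply spells out the three ULA identities and the rank-one factorization $\mathbf{\bar H}_1\mathbf{f}^\star=(\mathbf{a}_M^T(\theta_{\mathrm{AoD},1})\mathbf{f}^\star)\,\mathbf{a}_N(\theta_{\mathrm{AoA},1})$ that make the substitution evaluate to $a_2^2a_1^2MN^2$, $b_2^2a_1^2MN$, and the remaining constants. There is no difference in strategy, only in level of detail.
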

\begin{proof}
Substituting (\ref{op_phi}) and (\ref{op_f}) into (\ref{C_up_ray}), we can obtain the desired result.
\end{proof}

Proposition \ref{prop2} shows that by applying the proposed beamforming scheme, a power gain of order $N^2 M$  corresponding to the AP-IRS-user link can be obtained, although only statistical CSI is available at the AP. The $M$-fold gain comes from transmit beamforming, while the $N^2$-fold gain is achieved due to
both the phase shift beamforming and the inherent aperture gain of the IRS.
It is worth noting that the proposed statistical CSI based beamforming scheme achieves the
same power gain as the beamforming scheme with full CSI \cite{wu2019intelligent}.

\section{Numerical Results} \label{S4}
In this section, numerical results are presented to validate the performance of our proposed beamforming schemes. Unless otherwise specified, the following parameters are used in the simulations. The signal bandwidth is set to be 180 kHz and the noise power spectral density is -170 dBm/Hz. The transmit power $P$ is set to be -40 dBm. The distances are $d_0 = 200 \ \mathrm{m}, d_1  = 250 \ \mathrm{m},  d_2 = 50 \ \mathrm{m}$, with the pathloss exponents $\alpha_0 = 3.5$, $\alpha_1 =2.5$ and $\alpha_2 = 2.2$. Moreover, the Rician factors are normalized as $a_i =  \sqrt{\frac{K_i}{K_i+1}}, b_i = \sqrt{\frac{1}{K_i+1}}, K_i = 1$ for $i = 0, 1, 2$. The numbers of AP antennas and reflecting elements are set to be $M=8$ and $N=128$, respectively. The AoA and AoD in the LoS components are randomly chosen from $[0, 2\pi)$. The Monte Carlo results are averaged over 10,000 independent trials.

Fig.~\ref{upperbound_tightness} demonstrates the tightness of the upper bound approximation of the ergodic capacity in Proposition~\ref{proposition}, where the transmit beam and phase shift beam are designed according to Algorithm \ref{Alg}.
We can see that the upper bound approximation curves almost overlap with the Monte Carlo curves, which comfirms the tightness of the upper bound. Furthermore, we observe the intuitive result that the ergodic capacity increases significantly with $N$, indicating the benefit of applying a large number of reflecting elements.
In addition, as the Rician K-factor increases, the ergodic capacity becomes larger, due to the fact that only statistical CSI is available and used for the beamforming design.

\begin{figure}[htbp]
	\centering
	\includegraphics[width=3.5in]{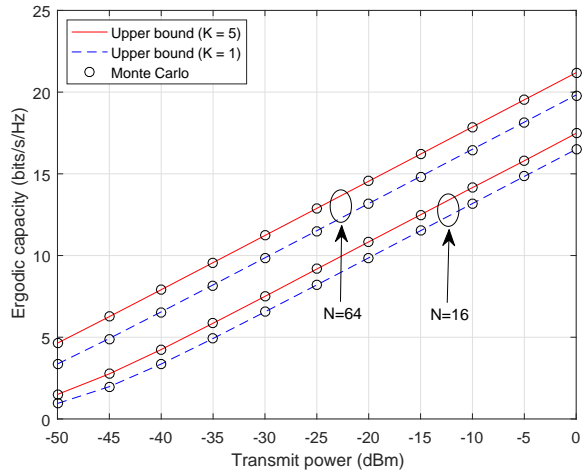}
	\caption{Comparison of upper bound approximation results and Monte Carlo results}
	\label{upperbound_tightness}
\end{figure}

Fig.~\ref{fig_converge} depicts the convergence of Algorithm \ref{Alg} with different numbers of reflecting elements and AP antennas, where either an optimization of the transmit beamformer or phase shift beam is considered as an iteration.
As can be readily observed, for any configuration of $M$ and $N$, the objective function is non-decreasing after each iteration, which is consistent with our analysis in Remark \ref{remark1}. Moreover, the proposed algorithm converges very fast. In all configurations, the objective function reaches a sub-optimal point after only 3 iterations.
\begin{figure}[htbp]
	\centering
	\includegraphics[width=3.5in]{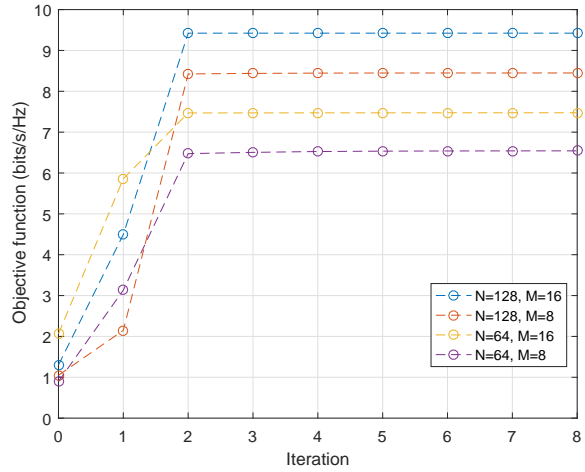}
	\caption{Convergence of Algorithm \ref{Alg}}
	\label{fig_converge}
\end{figure}

Fig. \ref{fig_Rician} shows the performance of the proposed algorithm in  the Rician fading case, i.e., Algorithm \ref{Alg}.
For comparison, both the SDR  algorithm in \cite{wu2019intelligent} and the random scheme
are presented as the benchmark.
As can be seen, the proposed algorithm performs much better than the random scheme.
In addition, our proposed algorithm achieves similar performance to the benchmark algorithm assuming full instantaneous CSI, although only statistical CSI is exploited.
Moreover, as the Rician K-factor increases, the gap between
these two algorithms becomes very small. This is because with  a large Rician K-factor, the CSI is dominated by the LOS channels.
\begin{figure}[htbp]
	\centering
	\includegraphics[width=3.5in]{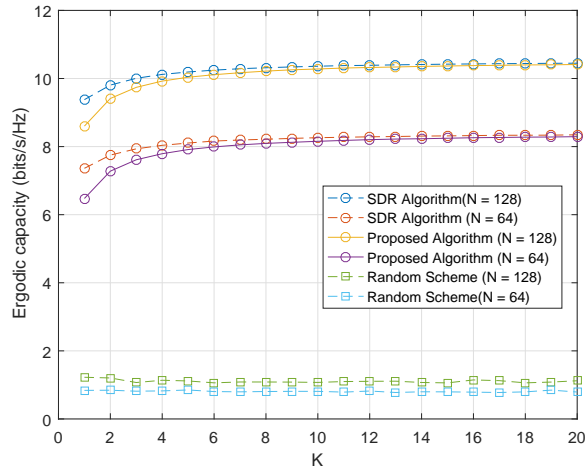}
	\caption{Performance of the proposed algorithm in the Rician fading case}
	\label{fig_Rician}
\end{figure}

Fig. \ref{fig_Rayleigh} shows the performance of the proposed beamforming scheme when the transmitter-receiver channel undergoes Rayleigh fading, where the analytical results are generated according to Proposition \ref{prop2}. For comparison, both the SDR  algorithm in \cite{wu2019intelligent} and the random scheme are presented as the benchmark.
As can be readily seen, the analytical results match well with the Monte Carlo simulation results, thereby validating the correctness of Proposition \ref{prop2}. In addition, our proposed algorithm is superior to the random scheme.
Moreover, the performance of the proposed beamfoming scheme is very close to that of the benchmark algorithm assuming  instantaneous CSI. As the Rician K-factor or the number of reflecting elemnets increases, the gap between these two schemes  becomes smaller.
It is worth noting that the benchmark algorithm operates in an iterative manner,
while our proposed beamfoming scheme has a closed-form solution, thereby having much lower complexity.
\begin{figure}[htbp]
	\centering
	\includegraphics[width=3.5in]{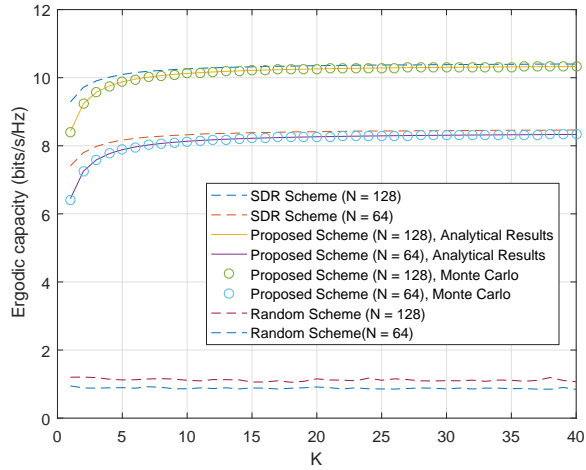}
	\caption{Performance of the proposed beamforming scheme in the Rayleigh fading case}
	\label{fig_Rayleigh}
\end{figure}

Fig. \ref{Rician_VS_Rayleigh} compares the two different fading cases.
As can be readily observed, the Rician fading case is superior to the Rayleigh fading case.
This is because the transmitter does not have any CSI about the direct channel in the Rayleigh fading case, but has some statistical CSI about the direct channel in the Rician fading case, due to the existence of the LOS path.
As the number of reflecting elements becomes larger (for example, above 50), the two fading cases  achieve similar performance. The reason is that with a large number of reflecting elements, the performance of both cases is dominated by the strong IRS-aided  channel.
\begin{figure}[htbp]
	\centering
	\includegraphics[width=3.5in]{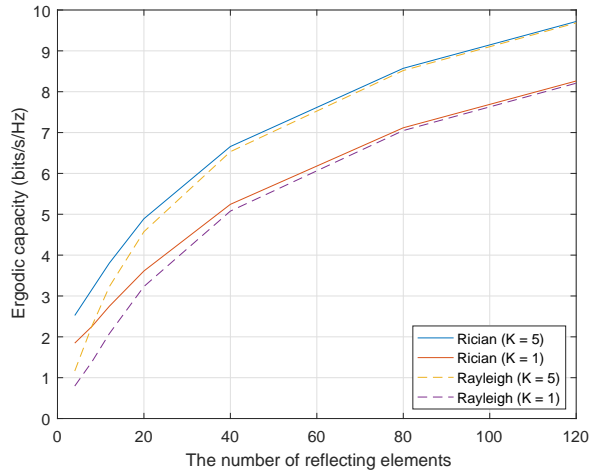}
	\caption{Comparison of the Rician fading case and the Rayleigh fading case}
	\label{Rician_VS_Rayleigh}
\end{figure}

Fig. \ref{Rician_VS_Rayleigh_SNR} shows the impact of transmit power on the capacity in two different fading cases.
It is intuitive that the performance of both cases improve as the transmit power increases. Moreover, when the number of reflecting elements is small, there is a noticeable capacity gap between the two fading cases. In contrast, with a large number of reflecting elements, the two cases achieve almost the same performance, mainly due to the dominated IRS-aided channel.
\begin{figure}[htbp]
	\centering
	\includegraphics[width=3.5in]{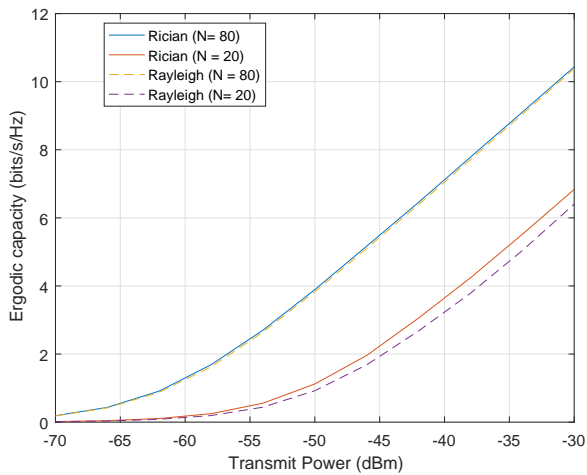}
	\caption{The impact of transmit power on the ergodic capacity}
	\label{Rician_VS_Rayleigh_SNR}
\end{figure}

\section{Conclusion} \label{S5}
This paper has addressed the problem of joint transmit beam and phase shifts design in an IRS-aided MISO communication system. Assuming that only statistical CSI is available at the transmitter, efficient algorithms are designed to maximize the ergodic capacity of the system. Specifically, for the Rician fading case, an iterative algorithm is proposed and the convergence of the algorithm is established. For the Rayleigh fading case, closed-form designs are obtained. Simulation results show that the proposed algorithm achieves similar performance as the algorithm requiring instantaneous CSI.




\begin{appendix} \label{Appendix1}
\section{{Proof of Proposition~\ref{proposition}}}
Applying the Jensen's inequality, we have
\begin{align}
 \mathbb{E}\left\{ \log_2 \left( 1+\gamma \right) \right\}
\lessapprox   \log_2 \left(1+\mathbb{E}\{\gamma\} \right)
= \log_2 \left( 1+\gamma_0 \mathbb{E} \left\{ \left|(\mathbf{h}_2^T \boldsymbol{\Phi} \mathbf{H}_1+\lambda\mathbf{g}^T) \mathbf{f} \right|^2 \right\} \right).
\end{align}
The remaing task is the derivation of $\mathbb{E} \left\{ \left|(\mathbf{h}_2 \boldsymbol{\Phi} \mathbf{H}_1+\lambda\mathbf{g}) \mathbf{f} \right|^2 \right\}$. Applying the binomial expansion theorem, we have
	\begin{align}
	& \left| \left(\mathbf{h}_2^T \boldsymbol{\Phi} \mathbf{H}_1+\lambda \mathbf{g}^T \right) \mathbf{f} \right|^2 \\
	=& \left| \left(a_2 \mathbf{\bar{h}}_2^T+ b_2 \mathbf{\tilde{h}}_2^T \right) \boldsymbol{\Phi} \left(a_1 \mathbf{\bar{H}}_1 + b_1 \mathbf{\tilde{H}}_1 \right) \mathbf{f} + \lambda\left(a_0 \mathbf{\bar{g}}^T + b_0 \mathbf{\tilde{g}}^T \right) \mathbf{f} \right|^2 \nonumber \\
	=& \left| x_1 + x_2 + x_3 + x_4 + x_5 \right|^2,\nonumber
	\end{align}
where
\begin{align}
& x_1 = (a_2 a_1 \mathbf{\bar{h}}_2^T \boldsymbol{\Phi} \mathbf{\bar{H}}_1+ \lambda a_0 \mathbf{\bar{g}}^T) \mathbf{f}, \\
& x_2 = a_2 b_1 \mathbf{\bar{h}}_2^T \boldsymbol{\Phi} \mathbf{\tilde{H}}_1 \mathbf{f}, \\
& x_3 = b_2 a_1 \mathbf{\tilde{h}}_2^T \boldsymbol{\Phi} \mathbf{\bar{H}}_1 \mathbf{f}, \\
& x_4 = b_2 b_1 \mathbf{\tilde{h}}_2^T \boldsymbol{\Phi} \mathbf{\tilde{H}}_1 \mathbf{f}, \\
& x_5 = \lambda b_0 \mathbf{\tilde{g}}^T \mathbf{f}.
\end{align}

It is easy to observe that $x_1$ is a constant and $\mathbb{E}\{x_i\} = 0$ holds for $i = 2,3,4,5$. Besides, since $\mathbf{\tilde{h}}_2$, $\mathbf{\tilde{H}}_1$ and $\mathbf{\tilde{g}}$ have zero means and are indepedent with each other, we can derive that
\begin{align}
& \mathbb{E} \left\{ \left|(\mathbf{h}_2^T \boldsymbol{\Phi} \mathbf{H}_1+\lambda \mathbf{g}^T) \mathbf{f} \right|^2 \right\} \\
= & \mathbb{E} \left\{ \left| x_1 + x_2 + x_3 + x_4 + x_5 \right|^2 \right\} \nonumber \\
= & |x_1|^2 + \mathbb{E}\{|x_2|^2\} + \mathbb{E}\{|x_3|^2\} + \mathbb{E}\{|x_4|^2\} + \mathbb{E}\{|x_5|^2\}.\nonumber
\end{align}

Denote ${\bf w} \triangleq\tilde{\bf H}_1 {\bf f}$. $\mathbb{E}\{|x_2|^2\} \in \mathbb{C}^{N \times 1}$ can be expressed as
\begin{align}
\mathbb{E}\{|x_2|^2\}= a_2^2 b_1^2
\mathbb{E}
\left\{
\mathsf{tr}
\left(
\bar{\bf h}_{2}^{*} \bar{\bf h}_{2}^{T} {\bf w} {\bf w}^H
\right)
\right\}=a_2^2 b_1^2
\mathsf{tr}
\left(
\bar{\bf h}_{2}^{*} \bar{\bf h}_{2}^{T} \mathbb{E}
\left\{ {\bf w} {\bf w}^H \right\}
\right).
\end{align}
Noticing that $\mathbb{E}
\left\{ {\bf w} {\bf w}^H \right\}={\bf I}_N$, we have
\begin{align}
    \mathbb{E}\{|x_2|^2\}=a_2^2 b_1^2\mathsf{tr}
\left(
\bar{\bf h}_{2}^{*} \bar{\bf h}_{2}^{T}
 {\bf I}_N
\right)=a_2^2 b_1^2 N.
\end{align}

Following the similar lines, it can be derived that
\begin{align}
&\mathbb{E}\{|x_3|^2\} =
b_2^2 a_1^2
\left\|\mathbf{\bar{H}_\text{1} f}\right\|^2, \\
&\mathbb{E}\{|x_4|^2\} =
b_2^2 b_1^2 N, \\
&\mathbb{E}\{|x_5|^2\} = \lambda^2 b_0^2.
\end{align}

Summing over all the values yields the desired result.

\end{appendix}

\end{document}